\newtheorem{theorem}{Theorem}
\newtheorem{definition}{Definition}
\newtheorem{corollary}[theorem]{Corollary}
\newtheorem{lemma}[theorem]{Lemma}
\newcommand{\condgen}[6]{{#1}#2 #5 #3 #6 #4}
\newcommand{\bbrd}[1]{\mbox{\rm{I}\kern-.1667em{#1}}}
\newcommand{\EXP}{\mathbb{E}}
\newcommand{\PROB}{\mathbb{P}}
\newcommand{\Probcmd}[2]{\condgen{\PROB}{\Bigl\{}{\Bigm|}{\Bigr\}}{#1}{#2}}
\newcommand{\Expcmd}[2]{\condgen{\EXP}{\Bigl[}{\Bigm|}{\Bigr]}{#1}{#2}}
\DeclareMathOperator{\Var}{Var}
\newsavebox{\fmbox}
\newenvironment{fmpage}[1]
        {\begin{lrbox}{\fmbox}\begin{minipage}{#1}}
        {\end{minipage}\end{lrbox}\fbox{\usebox{\fmbox}}}
\newcounter{algocnt}
\newenvironment{algolist}[1]{%
    \begin{list}{\thealgocnt}
    {\parsep 0in\usecounter{algocnt}\setcounter{algocnt}{0}\renewcommand{\thealgocnt}{{#1}\arabic{algocnt}}
    \setlength{\rightmargin}{0in}
    \settowidth{\leftmargin}{{#1}999}\addtolength{\leftmargin}{\labelsep}}}{\end{list}}
\newcommand{\algotab}[0]{\hspace*{1.2\labelsep}}
\newcommand{\mt}[0]{\algotab\ }
\newcommand{\setto}{\leftarrow}
\newcommand{\msd}{d}
\newcommand{\qexp}{r}
\newcommand{\algoinc}{\textsf{FP-Increment}}
\newcommand{\accuracy}{A}
\newcommand{\accbound}{B}
\newcommand{\pcnt}[2]{p_{#1}(#2)}
\begin{document}

\begin{titlepage}
\title{Approximate counting with a floating-point counter}

\author{Mikl\'os Cs\H{u}r\"os\thanks{
	Department of Computer Science and Operations Research,
	Universit\'e de Montr\'eal. E-mail:
	csuros AT iro.umontreal.ca}}

\maketitle

\begin{abstract}
Memory becomes a limiting factor in contemporary
applications, such as analyses of the Webgraph and molecular
sequences, when many objects need to be counted
simultaneously. Robert Morris [{\em Communications of the
ACM}, 21:840--842, 1978] proposed a probabilistic technique
for approximate counting that is extremely space-efficient.
The basic idea is to increment a counter containing the
value $X$ with probability $2^{-X}$. As a result, the
counter contains an approximation of $\lg n$ after~$n$
probabilistic updates stored in $\lg\lg n$ bits. Here we
revisit the original idea of Morris, and introduce a binary
floating-point counter that uses a $d$-bit significand in
conjunction with a binary exponent. The counter yields a
simple formula for an unbiased estimation of $n$ with a
standard deviation of about $0.6\cdot n2^{-d/2}$, and uses
$d+\lg\lg n$ bits.

We analyze the floating-point counter's performance in a
general framework that applies to any probabilistic counter,
and derive practical formulas to assess its accuracy. 
\end{abstract}

\end{titlepage}

\section{Introduction}
An elementary information-theoretic argument
shows that $\bigl\lceil \lg (n+1)\bigr\rceil$
bits are necessary to represent 
integers between~0 and~$n$ 
($\lg$ denotes binary logarithm throughout the paper).
Counting some interesting objects in a data 
set thus takes logarithmic space. 
Certain applications
need to be more economical because 
they need to maintain many counters simultaneously while, say, 
tracking patterns in large data streams. 
Notable examples where memory becomes a limiting factor
include analyses of the Webgraph~\cite{Bechetti.Web.spam,Donato.webgraph}. 
Numerous bioinformatics studies also
require space-efficient solutions
when searching for recurrent motifs in 
protein and DNA sequences. 
These frequent sequence 
motifs are associated with mobile, structural, regulatory or 
other functional elements, and have been studied since the first molecular sequences 
became available~\cite{Karlin.symposium}. 
Some recent studies have concentrated on patterns involving
long oligonucleotides, i.e., ``words'' of length 16--40 
over the 4-letter DNA alphabet, revealing
potentially novel regulatory features~\cite{JonesPevzner.longmotifs,Pyknon},
and general characteristics of copying processes 
in genome evolution~\cite{spectrum.dPln,Yorke.duplication.counts}. 
Hashtable-based indexing techniques~\cite{SSAHA} 
used in homology search and genome assembly procedures also rely on
counting in order to identify repeating sequence patterns.   
In these applications, billions of counters need to be 
handled, making implementations difficult 
in mainstream computing environments.  
The need for many counters is aggravated by the fact
that the counted features often have heavy-tailed
frequency distributions~\cite{spectrum.dPln,Donato.webgraph,Yorke.duplication.counts},
and there is thus no ``typical'' size for  
individual counters that could guide the memory allocation at the outset. 
As a numerical example, 
consider a study~\cite{spectrum.dPln} of the 16-mer distribution in the human genome sequence,
which has a length surpassing three billion. 
More than four billion ($4^{16}$)
different words 
need to be counted, 
and the counter values span more than sixteen binary magnitudes
even though the average 16-mer occurs only once or twice. 

One way to greatly reduce memory usage 
is to relax the requirement of exact counts. 
Namely, approximate counting to~$n$ is possible using
$\lg\lg n+O(1)$ bits with 
probabilistic techniques~\cite{Flajolet.analysis.Morris,Morris.counting}. 
The idea of probabilistic counting was introduced by Morris~\cite{Morris.counting}.
In the simplest case, a counter is initialized as $X=0$. 
The counter is incremented by one at the occurrence of an event with 
probability~$2^{-X}$. The counter is meant to track the magnitude of 
the true number of events. More precisely, 
after~$n$ events, the expected value 
of~$2^X$ is exactly~$(n+1)$.

A generalization of the binary Morris counter is the so-called {\em $q$-ary counter} with some 
$\qexp\ge 1$ and $q=2^{1/\qexp}$. In such a setup, the counter is incremented with probability~$q^{-X}$.
The actual event count is estimated as~$f(X)$, using the transformation
\[
f(x) = \frac{q^x-1}{q-1} = \frac{2^{x/\qexp}-1}{2^{1/\qexp}-1}.
\]
The function~$f$ yields an unbiased estimate, as 
$\EXP f(X) = n$ after~$n$ probabilistic updates.
The accuracy of a probabilistic counting method is characterized by 
the variance of the estimated count. For the $q$-ary counter, 
\begin{equation}\label{eq:var.q}
\Var f(X) = (q-1)\frac{n(n+1)}{2},
\end{equation}
which is approximately $\frac{\ln 2}{2 \qexp} n^2$ for large~$n$ and~$\qexp$.
The parameter~$\qexp$ governs the tradeoff between memory usage and 
accuracy. The counter stores~$X$ (with $n=f(X)$) 
using $\lg\qexp+\lg\lg n+o(1)$ bits; larger~$\qexp$ thus increases the accuracy at the expense of 
higher storage costs.

The main goal of this study is to introduce a novel algorithm for approximate counting.
Our {\em floating-point counter} is defined with the aid of a design parameter~$M=2^{\msd}$,
where~$d$ is a nonnegative integer. As we discuss later, $M$  
determines the tradeoff between memory usage and accuracy, analogously to parameter~$\qexp$ 
of the $q$-ary counter.
The procedure relies on a uniform
random bit generator~$\mathsf{RandomBit}()$. 
Algorithm $\algoinc$ below shows the incrementation procedure for a floating-point counter,
initialized with~$X=0$. 
Notice that the first~$M$ updates are deterministic.

\begin{center}
\begin{fmpage}{0.9\textwidth}
\begin{algolist}{}
\item[] $\algoinc(X)$ \hfill // {\em returns new value of $X$}
\item set $t\setto \lfloor X/M\rfloor$ \hfill // {\em bitwise right shift by $\msd$ positions}
\item \textbf{while} $t>0$ \textbf{do}
\item \mt\ \textbf{if} $\mathsf{RandomBit}()=1$ \textbf{then} \textbf{return} $X$
\item \mt\ set $t\setto t-1$
\item \textbf{return} $X+1$
\end{algolist}
\end{fmpage}
\end{center}

The counter value~$X=2^{\msd}\cdot t+u$, where~$u$ denotes 
the lower~$\msd$ bits, is used to estimate the actual count
$f(X)=(M+u)\cdot 2^t-M$. The counter thus stores~$X$ using $\msd + \lg\lg n +  o(1)$ bits. 
The estimate's standard deviation is $\frac{c}{\sqrt{M}} n$ where~$c$ fluctuates 
between about~$0.58$ and~$0.61$ asymptotically (see Corollary~\ref{cor:accuracy} for a 
precise characterization). Notice that a $q$-ary counter 
with~$r=M$ has asymptotically the same memory usage, and a standard deviation 
of about $\frac{0.59}{\sqrt r} n$ (see Eq.~\eqref{eq:var.q}).
Our algorithm thus has similar memory usage
and accuracy as $q$-ary counting. 
The floating-point counter is more 
advantageous in two aspects. First, the first~$M$ updates are deterministic, i.e.,
small values are exactly represented with convenience. Second, the counter 
can be implemented with a few elementary integer and bitwise operations, whereas a $q$-ary counter works with 
irrational probabilities. 
The random updates in the floating-point counter 
occur with exact integer powers~$2^{-i}$,  
and such random values can be generated using an average of~2 random bits.
Specifically, the~$\algoinc$ procedure
uses an expected number of $\Bigl(2-\frac{t}{2^t-1}\Bigr)$ calls to 
the random bit generator $\mathsf{RandomBit}()$.
In contrast, a $q$-ary counter needs a uniform random number 
in the range $(0,1)$ to produce a random event with probability~$2^{-X/r}$. 

The rest of the paper is organized as follows. 
In order to quantify the performance of floating-point counters, we found it fruitful
to develop a general analysis of probabilistic counting, which is of independent 
mathematical interest. Section~\ref{sec:general.thm} presents the main results
about the accuracy of probabilistic counting methods. 
First, Theorem~\ref{tm:f} shows that every 
probabilistic counting method 
has a unique unbiased estimator~$f$ with $\EXP f(X)=n$ after~$n$ 
probabilistic updates. Second, Theorem~\ref{tm:variance} shows that the accuracy of 
any such method is computable directly from the counter value. 
Finally, Theorem~\ref{tm:accuracy} gives relatively simple 
upper and lower bounds on the asymptotic accuracy 
of the unbiased estimator.
The proofs of the theorems are given in Section~\ref{sec:general.proofs}, 
which can be safely skipped on first reading.
Section~\ref{sec:fp} presents 
floating-point counters in detail, and mathematically characterizes their utility 
by relying on the results of Section~\ref{sec:general.thm}.
Section~\ref{sec:fp} further illustrates the theoretical analyses with simulation experiments
comparing $q$-ary and floating-point counters.


\section{Probabilistic counting}\label{sec:general.thm}
For a formal discussion of probabilistic counting, consider the 
Markov chain formed by the successive counter values.
\begin{definition}\label{def:counter}
A {\em counting chain} is 
a Markov chain $(X_n\colon n=0,1,\dotsc)$ with 
\begin{subequations}\label{eq:counter.general.def}
\begin{align}
X_0 & = 0;\\
\Probcmd{X_{n+1}=k+1}{X_n=k} & = q_k\\
\Probcmd{X_{n+1}=k}{X_n=k} & = 1-q_k,
\end{align}
\end{subequations}
where $0<q_k\le 1$ are the transition probabilities defining the counter.
\end{definition}

It is a classic result associated with probabilities in
pure-birth processes~\cite{KarlinTaylor} that the 
{\em $n$-step probabilities} $\pcnt{n}{k}=\PROB\{X_n=k\}$ are
computable by a simple recurrence (see
Equations~(\ref{eq:pcnt.rec.0}--\ref{eq:pcnt.rec.k}) later).
In case of probabilistic counting, we want to 
infer~$n$ from the value of~$X_n$ alone through a 
computable function~$f$. A given probabilistic counting method is 
defined by the transition probabilities and the 
function~$f$. As we will see later (Theorem~\ref{tm:f}), 
the transition probabilities determine a unique 
function~$f$ that gives an unbiased estimate of the update count~$n$.

\begin{definition}\label{def:f}
A function~$f\colon \mathbb{N}\mapsto\mathbb{N}$ is an 
{\em unbiased count estimator} for a given counting chain 
if and only if $\EXP f(X_n) = n$ holds for all~$n=0,1,\dotsc$. 
\end{definition}

In the upcoming discussions, we assume that 
the probabilistic counting method uses an unbiased count estimator~$f$. 
The merit of a given method is gauged by its accuracy, 
as defined below. 
\begin{definition}\label{def:accuracy}
The {\em accuracy} of the counter is the coefficient of variation
\[
\accuracy_n = \frac{\sqrt{\Var f(X_n)}}{\EXP f(X_n)}. 
\]
\end{definition}

The theorems below provide an analytical framework for evaluating 
probabilistic counters. Theorem~\ref{tm:f} shows that
the unbiased estimator is uniquely defined by a relatively simple expression
involving the transition probabilities. Theorem~\ref{tm:variance}
shows that the uncertainty of the estimate can be determined 
directly from the counter value. Theorem~\ref{tm:accuracy}
gives a practical bound on the asymptotic accuracy of the counter. 

\begin{theorem}\label{tm:f}
The function 
\begin{subequations}\label{eq:gen.estimator}
\begin{align}
f(0) & = 0 \label{eq:gen.estimator.0}\\
f(k) & = \frac{1}{q_0}+\frac{1}{q_1}+\dotsc+\frac{1}{q_{k-1}}. & \{k>0\} \label{eq:gen.estimator.k}
\end{align}
\end{subequations}
uniquely defines the 
unbiased count estimator~$f$ 
for any given set of transition probabilities~$(q_k\colon k=0,1,\dotsc)$. 
Thus, for any given counting chain, we can determine efficiently an unbiased estimator. 
\end{theorem}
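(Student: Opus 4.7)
The plan is to split the statement into existence (the $f$ of the theorem is unbiased) and uniqueness (no other function can be unbiased).

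For existence I would proceed by induction on $n$. The base case $n=0$ is immediate from $X_0=0$ and $f(0)=0$. For the inductive step, I would condition on $X_n$ and apply the Markov property:
\[
\Expc{f(X_{n+1})}{X_n=k} = (1-q_k)\,f(k) + q_k\,f(k+1) = f(k) + q_k\bigl(f(k+1)-f(k)\bigr).
\]
The key algebraic identity is the telescoping relation $f(k+1)-f(k) = 1/q_k$, built into the definition in~\eqref{eq:gen.estimator.k}. It collapses the conditional expectation to $f(X_n)+1$, so $\EXP f(X_{n+1}) = \EXP f(X_n)+1 = n+1$ by the induction hypothesis.

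For uniqueness I would exploit the triangular structure of the counting chain. Since $X_{n+1}\in\{X_n,X_n+1\}$ we have $X_n\le n$ pathwise, and since every $q_k>0$ the extremal probability $\pcnt{n}{n}=q_0 q_1\cdots q_{n-1}$ is strictly positive. Any unbiased estimator $g$ must therefore satisfy
\[
\sum_{k=0}^{n} \pcnt{n}{k}\,g(k) = n, \qquad n=0,1,2,\dotsc,
\]
which is a lower-triangular linear system with nonzero diagonal entries $\pcnt{n}{n}$. Such a system has a unique solution, obtained by solving successively for $g(0),g(1),\dotsc$; since the $f$ of the theorem is one such solution by the existence part, $g=f$.

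I do not foresee any real obstacle. The inductive step for existence reduces to a single line via the telescoping identity, and uniqueness follows from the elementary fact that a triangular linear system with nonzero diagonal is invertible. The only technicalities are to note that $X_n\le n$ pathwise and that $\pcnt{n}{n}>0$, both immediate from Definition~\ref{def:counter}; the latter fact is also why the hypothesis $q_k>0$ is essential.
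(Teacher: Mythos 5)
Your proof is correct. The uniqueness half is essentially identical to the paper's Lemma~\ref{lm:estimator.unique}: both observe that $\PROB\{X_n>n\}=0$ and $\pcnt{n}{n}=q_0\cdots q_{n-1}>0$, so the system of equations $\sum_{k=0}^n \pcnt{n}{k}f(k)=n$ is lower triangular with nonzero diagonal and pins down $f$ recursively. The existence half, however, takes a genuinely different and more elementary route. The paper introduces the durations $L_k(n)=\sum_{i<n}\{X_i=k\}$, uses the memoryless property of the geometric holding time to show $\EXP L_k(n)=\PROB\{X_n>k\}/q_k$, and then sums over $k$ and rearranges the double sum to recover $\EXP f(X_n)=\sum_k L_k(n)=n$. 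Your argument instead runs a one-step induction: conditioning on $X_n=k$ and using the telescoping identity $f(k+1)-f(k)=1/q_k$ gives $\Expc{f(X_{n+1})}{X_n}=f(X_n)+1$, hence $\EXP f(X_{n+1})=\EXP f(X_n)+1$. Your induction is shorter and avoids any appeal to occupation times or the memoryless property beyond the Markov transition rule itself, while the paper's duration argument has the virtue of making the combinatorial meaning of $f$ (expected occupation-time-weighted sum) more transparent and of setting up machinery reused implicitly in the variance analysis. Both are complete and correct.
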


Theorem~\ref{tm:f} confirms the intuition that the transition probabilities must be exponentially decreasing
in order to achieve storage on~$\lg\lg n + O(1)$ bits. Otherwise, with
subexponential $q_k^{-1}=2^{o(k)}$, 
one would have $f(k) = 2^{o(k)}$, leading to~$\lg n=o(k)$. 

The next definition provides a computable function for quantifying the uncertainty of~$f(X)$. 
\begin{definition}\label{def:g}
The {\em variance function} for a given counting chain is defined  
by
\begin{subequations}\label{eq:g.def}
\begin{align}
g(0) & = 0 \\
g(k) & = \frac{1-q_0}{q_0^2}+\frac{1-q_1}{q_1^2}+\dotsm+\frac{1-q_{k-1}}{q_{k-1}^2} & \{ k>0\}
\end{align}
\end{subequations}
\end{definition}

Theorem~\ref{tm:variance} below shows that the 
accuracy is computable directly from the counter value for any counting chain.  
The statement has a practical relevance (since count estimates can be coupled
with the variance function's value), 
and the variance function is used to evaluate the asymptotic accuracy 
of any counting chain (see Theorem~\ref{tm:accuracy}).

\begin{theorem}\label{tm:variance}
The variance function~$g$ of Definition~\ref{def:g}
provides an unbiased estimate for the variance of~$f$ from Theorem~\ref{tm:f}. 
Specifically, 
\begin{equation}
\Var f(X_n) =  \EXP g(X_n)
\end{equation}
holds for all~$n\ge 0$, 
where the moments refer to the space of $n$-step probabilities.
\end{theorem}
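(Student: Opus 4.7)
My plan is to prove this by induction on $n$. The base case $n=0$ is immediate since $X_0 = 0$ deterministically, so both $\Var f(X_0) = 0$ and $\EXP g(X_0) = g(0) = 0$.

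For the inductive step, I would exploit the one-step recurrences that are built into the definitions: from~\eqref{eq:gen.estimator} we have $f(k+1) = f(k) + 1/q_k$, and from~\eqref{eq:g.def} we have $g(k+1) = g(k) + (1-q_k)/q_k^2$. Conditioning on $X_n = k$ and using the transition probabilities from Definition~\ref{def:counter}, I compute
\[
\Expcsm{f(X_{n+1})^2}{X_n = k} = q_k\bigl(f(k) + 1/q_k\bigr)^2 + (1-q_k)f(k)^2 = f(k)^2 + 2 f(k) + 1/q_k,
\]
and similarly
\[
\Expcsm{g(X_{n+1})}{X_n = k} = g(k) + q_k\cdot\frac{1-q_k}{q_k^2} = g(k) + 1/q_k - 1.
\]
Taking expectations over $X_n$ yields the two key identities
\[
\EXP f(X_{n+1})^2 = \EXP f(X_n)^2 + 2\,\EXP f(X_n) + \EXP(1/q_{X_n}),\qquad
\EXP g(X_{n+1}) = \EXP g(X_n) + \EXP(1/q_{X_n}) - 1.
\]

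Now I combine these. By Theorem~\ref{tm:f}, $\EXP f(X_n) = n$, and by the inductive hypothesis, $\EXP f(X_n)^2 = n^2 + \EXP g(X_n)$. Substituting,
\[
\EXP f(X_{n+1})^2 = n^2 + \EXP g(X_n) + 2n + \EXP(1/q_{X_n}) = (n+1)^2 + \bigl(\EXP g(X_n) + \EXP(1/q_{X_n}) - 1\bigr) = (n+1)^2 + \EXP g(X_{n+1}).
\]
Since $\EXP f(X_{n+1}) = n+1$, this rearranges precisely to $\Var f(X_{n+1}) = \EXP g(X_{n+1})$, completing the induction.

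I do not anticipate any serious obstacle: the crucial observation is simply that the discrete ``derivative'' of $f^2$ under the chain's one-step dynamics splits into a drift term $2f(k)+1$ plus a noise term $(1-q_k)/q_k^2$, where the drift matches the change in $n^2$ and the noise matches the increment of $g$. If a cleaner presentation is desired, one could equivalently exhibit $f(X_n)^2 - g(X_n) - n(n+1) + \text{const}$ as a martingale, but the plain induction above is the shortest route and the only arithmetic to watch carefully is the cancellation $2n+1 - 1 = 2n$ that aligns $n^2$ with $(n+1)^2$.
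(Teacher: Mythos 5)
Your proof is correct and takes essentially the same route as the paper: both are a one-step analysis showing that $\Var f(X_n)$ and $\EXP g(X_n)$ satisfy the same first-order recurrence (with increment $\EXP(1/q_{X_n})-1$) and the same initial value $0$. The only cosmetic difference is that you condition on $X_n=k$ and phrase the conclusion as an explicit induction, while the paper manipulates the $n$-step probability recurrences \eqref{eq:pcnt.rec} and observes directly that the increments of $\Var f$ and $\EXP g$ coincide.
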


Theorem~\ref{tm:accuracy} is
the last main result of this section. 
The statement relates the asymptotics of
the variance function, the unbiased count estimator, and the counting chain's accuracy. 
\begin{theorem}\label{tm:accuracy}
Let~$\accuracy_n$ be the accuracy of Definition~\ref{def:accuracy}, and let
\begin{equation}
\accbound_k = \frac{\sqrt{g(k)}}{f(k)}. 
\end{equation}
Let~$\liminf_{k\to\infty} \accbound_k =\mu$. 
Suppose that $\limsup_{k\to\infty} \accbound_k=\lambda<1$ (and, thus, $\mu<1$). 
Then 
\begin{subequations}\label{eq:accuracy.bound}
\begin{align}
\limsup_{n\to\infty} \accuracy_n \le \frac{\lambda}{\sqrt{1-\lambda^2}}\\
\liminf_{n\to\infty} \accuracy_n \ge \frac{\mu}{\sqrt{1-\mu^2}}.
\end{align}
\end{subequations}
\end{theorem}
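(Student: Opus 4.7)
The plan is to translate everything into a relationship between $\EXP g(X_n)$ and $\EXP f(X_n)^2$, exploiting three identities: $\EXP f(X_n)=n$ from Theorem~\ref{tm:f}, $\Var f(X_n)=\EXP g(X_n)$ from Theorem~\ref{tm:variance}, and the pointwise relation $g(k)=\accbound_k^2 f(k)^2$ that comes straight from the definition of $\accbound_k$. Combining the first two gives $\EXP f(X_n)^2 = n^2 + \EXP g(X_n)$, so once $\EXP g(X_n)$ is bounded on both sides in terms of $\EXP f(X_n)^2$, one can solve for $\EXP g(X_n)/n^2 = \accuracy_n^2$ algebraically and then take square roots.

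For the upper bound, I fix $\epsilon>0$ and choose $K$ so that $\accbound_k \le \lambda+\epsilon$ for all $k\ge K$, where $\epsilon$ is small enough that $\lambda+\epsilon<1$. Splitting the expectation,
\begin{equation*}
\EXP g(X_n) \;=\; \sum_{k<K} p_n(k)\,g(k) + \sum_{k\ge K} p_n(k)\,\accbound_k^2 f(k)^2
\;\le\; C_K + (\lambda+\epsilon)^2 \EXP f(X_n)^2,
\end{equation*}
where $C_K=\max_{k<K} g(k)$ is a constant independent of~$n$. Substituting $\EXP f(X_n)^2=n^2+\EXP g(X_n)$ and collecting terms yields
\begin{equation*}
\EXP g(X_n) \;\le\; \frac{C_K + (\lambda+\epsilon)^2 n^2}{1-(\lambda+\epsilon)^2}.
\end{equation*}
Dividing by $n^2$, taking $\limsup_{n\to\infty}$ (the $C_K/n^2$ term vanishes), and finally letting $\epsilon\downarrow 0$ produces $\limsup \accuracy_n^2 \le \lambda^2/(1-\lambda^2)$.

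The lower bound is symmetric. For any $\epsilon\in(0,\mu)$ (the case $\mu=0$ is trivial), there exists $K$ with $\accbound_k\ge \mu-\epsilon$ for all $k\ge K$, so
\begin{equation*}
\EXP g(X_n) \;\ge\; (\mu-\epsilon)^2 \sum_{k\ge K} p_n(k)\,f(k)^2 \;\ge\; (\mu-\epsilon)^2\bigl(\EXP f(X_n)^2 - C_K'\bigr),
\end{equation*}
with $C_K'=\max_{k<K} f(k)^2$. Substituting the variance identity and rearranging give a lower bound on $\accuracy_n^2$ that, after $\liminf$ and $\epsilon\downarrow 0$, becomes $\mu^2/(1-\mu^2)$.

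The only delicate point is the bookkeeping of the finitely many ``exceptional'' states $k<K$ where the pointwise bound on $\accbound_k$ may fail: their total contribution must be shown to be an $n$-independent constant, which is immediate because $p_n(k)\le 1$ and $g,f$ are finite on $\{0,1,\dotsc,K-1\}$. Apart from this, the argument is purely algebraic, and the hypothesis $\lambda<1$ is used exactly to keep the denominator $1-(\lambda+\epsilon)^2$ positive so that the inequality can be solved for $\EXP g(X_n)$.
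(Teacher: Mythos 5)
Your proof is correct and follows essentially the same route as the paper's: split the expectations at a threshold $K$ beyond which $\accbound_k$ is within $\epsilon$ of its limsup/liminf, use $\Var f(X_n)=\EXP g(X_n)$ and $\EXP f^2(X_n)=n^2+\EXP g(X_n)$ to solve algebraically for $\accuracy_n^2=\EXP g(X_n)/n^2$, and let $\epsilon\to 0$. The only cosmetic difference is that the paper packages the argument through the ratio $W_n=\Var f(X_n)/\EXP f^2(X_n)$ and argues the head contribution is negligible via $\pcnt{n}{k}\to 0$ and $\sum_{k>K}\pcnt{n}{k}f^2(k)\to\infty$, whereas you bound the head by an $n$-independent constant and kill it by dividing by $n^2$, which is a marginally more elementary way to handle the same term.
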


\paragraph{\textbf{Example}} Consider the case of a $q$-ary counter, where $q_i=q^{-i}$ with some $q>1$.  
Theorem~\ref{tm:f} automatically gives the unbiased count estimator 
\[
f(k) = \sum_{i=0}^{k-1} q_i^{-1} = \frac{q^{k}-1}{q-1}.  
\]
Theorem~\ref{tm:variance} yields the variance function
\[
g(k) = \sum_{i=0}^{k-1} \bigl(q_i^{-2} - q_i^{-1}\bigr) = \frac{q^{2k}-1}{q^{2}-1}-\frac{q^{k}-1}{q-1}.
\]
In order to use Theorem~\ref{tm:accuracy}, observe that 
\[
\lambda^2 = \lim_{k\to\infty} \frac{g(k)}{f^2(k)} = \frac{q-1}{q+1} < 1.
\]
Therefore, we obtain the known result \cite{Flajolet.analysis.Morris}
that $\lim_{n\to\infty} A_n^2 = \frac{\lambda^2}{1-\lambda^2} = \frac{q-1}2$.


\section{Proofs}\label{sec:general.proofs}

In what follows, we use the shorthand notation
\[
\pcnt{n}{k} = \PROB\{X_n=k\}
\]
for the $n$-step probabilities.
By~\eqref{eq:counter.general.def}, 
$\pcnt{0}{0}=1$, and the recurrences
\begin{subequations}\label{eq:pcnt.rec}
\begin{align}
\pcnt{n+1}{0} & =  (1-q_0)\pcnt{n}{0} \label{eq:pcnt.rec.0}\\*
\pcnt{n+1}{k} & =  (1-q_k)\pcnt{n}{k} + q_{k-1}\pcnt{n}{k-1}   & \{k>0\} \label{eq:pcnt.rec.k}
\end{align}
\end{subequations}
hold for all $n\ge0$.

\begin{lemma}\label{lm:estimator.unique}
The unbiased estimator is unique. 
\end{lemma}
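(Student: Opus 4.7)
The plan is to argue by induction on $k$ that if $f$ and $\tilde f$ are two unbiased estimators then $f(k) = \tilde f(k)$ for all $k \ge 0$. Setting $h = f - \tilde f$, the unbiasedness of both estimators gives
\[
\EXP h(X_n) = \sum_{k\ge 0} h(k)\, \pcnt{n}{k} = 0 \qquad \text{for every } n \ge 0,
\]
so it suffices to show that this family of linear equations, indexed by $n$, forces $h \equiv 0$.

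The key observation is a triangular structure of the $n$-step probabilities. Since each step either keeps the chain in place or advances it by one, $X_n \le n$ deterministically, hence $\pcnt{n}{k} = 0$ whenever $k > n$. Moreover, $\pcnt{n}{n} = q_0 q_1 \cdots q_{n-1} > 0$ because each $q_k > 0$ by Definition~\ref{def:counter}. Both facts follow immediately from the recurrence~\eqref{eq:pcnt.rec}.

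Given this triangular structure, the induction is straightforward. The base case $n=0$ uses $\pcnt{0}{0}=1$, so $\EXP h(X_0) = h(0) = 0$. Assuming $h(0) = h(1) = \dotsb = h(n-1) = 0$, the equation for $\EXP h(X_n)$ collapses to
\[
h(n)\, \pcnt{n}{n} = 0,
\]
and since $\pcnt{n}{n} > 0$ we conclude $h(n) = 0$.

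I do not anticipate a serious obstacle: the only thing to be careful about is verifying the positivity $\pcnt{n}{n} > 0$ and the strict upper support bound $\pcnt{n}{k} = 0$ for $k > n$, both of which are immediate from the transition structure in~\eqref{eq:counter.general.def}. The argument does not require any growth or integrability hypothesis on $f$, since the sum $\sum_k h(k)\pcnt{n}{k}$ has only finitely many nonzero terms for each $n$.
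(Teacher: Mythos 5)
Your proof is correct and follows essentially the same route as the paper's: both exploit the triangular structure $\pcnt{n}{k}=0$ for $k>n$ together with $\pcnt{n}{n}=q_0\cdots q_{n-1}>0$ to pin down $f(n)$ inductively from $f(0),\dotsc,f(n-1)$. The paper writes this as an explicit recursive formula for $f(n)$, whereas you phrase it as showing the difference $h=f-\tilde f$ of two solutions vanishes; the two formulations are equivalent.
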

\begin{proof}
Since~$\EXP f(0)=0$ is imposed, and $X_0=0$ with certainty, $f(0)=0$. 
For all~$n$, $\PROB\{X_n>n\}=0$, so
\[
\EXP f(X_n) = \sum_{k=0}^n \pcnt{n}{k} f(k) = n.
\]
Thus, for all $n>0$, 
\[
f(n) =  \frac{n-\sum_{k=0}^{n-1} \pcnt{n}{k} f(k)}{\pcnt{n}{n}} 
	= \frac{n-\sum_{k=0}^{n-1} \pcnt{n}{k} f(k)}{q_0 q_1 \dotsm q_{n-1}}, 
\]
which shows that~$f(n)$ is uniquely determined by $f(0), \dotsc, f(n-1)$
and the $n$-step probabilities.
\end{proof}

\begin{proof}[Proof of Theorem~\ref{tm:f}]
Define the durations $L_k(n) = \sum_{i=0}^{n-1} \{ X_i=k\}$, 
i.e., the number of times $X_i=k$ for $i<n$. 
Define also $L_k = \lim_{n\to\infty} L_k(n) = \sum_{i=0}^\infty \{X_i=k\}$.
Clearly, $\EXP L_k = 1/q_k$. 
By the linearity of expectations, 
\begin{align*}
\EXP L_k & = \EXP L_k(n) + \EXP \sum_{i=n}^\infty \{ X_i = k\} \\*
	& = \EXP L_k(n) + \Expcmd{ \sum_{i=n}^\infty \{ X_i=k\}}{X_n\le k}\PROB\{X_n\le k\}\\*
	& = \EXP L_k(n) + \PROB\{X_n\le k\} \EXP L_k,
\end{align*}
where we used the memoryless property of the geometric distribution in the last step.
Consequently, 
\begin{equation}
\EXP L_k(n) = \frac{\PROB\{X_n>k\}}{q_k}.
\end{equation}
Now, 
\[
\EXP \sum_{k=0}^{\infty} L_k (n)  = \sum_{k=0}^{\infty}\PROB\{X_n>k\} \frac{1}{q_k}
 = \sum_{k=0}^{n} \pcnt{n}{k} \sum_{i=0}^{k-1} \frac1{q_i} 
 = \sum_{k=0}^{n} \pcnt{n}{k} f(k) 
 = \EXP f(X_n). 
\]
Since $\sum_{k=0}^{\infty} L_k (n)=n$,
we have $\EXP f(X_n) = n$ for all~$n$. 
By Lemma~\ref{lm:estimator.unique}, no other function~$f$ has the same property. 
\end{proof}

\begin{proof}[Proof of Theorem \ref{tm:variance}]
By~\eqref{eq:pcnt.rec}, for all $n\ge 0$, 
\begin{align*}
\EXP f^2(X_{n+1}) 
	& = \sum_{k=0}^{n+1} \pcnt{n+1}{k} f^2(k)\\
	& = \sum_{k=0}^{n} (1-q_k)\pcnt{n}{k} f^2(k) + \sum_{k=1}^{n+1} q_{k-1} \pcnt{n}{k-1} f^2(k)\\
	& = \EXP f^2(X_n) - \sum_{k=0}^{n} q_k \pcnt{n}{k} f^2(k) 
		+ \sum_{k=1}^{n+1} q_{k-1} \pcnt{n}{k-1} \bigl(f(k-1)+q_{k-1}^{-1}\bigr)^2\\
	& = \EXP f^2(X_n) + 2 \sum_{k=0}^n \pcnt{n}{k} f(k) + \sum_{k=0}^n \pcnt{n}{k} q_k^{-1}\\
	& = \EXP f^2(X_n) + 2 n + \sum_{k=0}^n \pcnt{n}{k} q_k^{-1}.
\end{align*}
Since $\Var f(X_n) = \EXP f^2(X_n)-\Bigl(\EXP f(X_n)\Bigr)^2 = \EXP f^2(X_n)-n^2$, 
\begin{equation}\label{eq:var.rec}
\Var f(X_{n+1}) = \Var f(X_n) + \sum_{k=0}^n \pcnt{n}{k} q_k^{-1} -1. 
\end{equation}
By~\eqref{eq:g.def} and~\eqref{eq:pcnt.rec},
\begin{align*}
\EXP g(X_{n+1}) 
	& = \sum_{k=0}^{n+1} \pcnt{n+1}{k} g(k)\\*
	& = \EXP g(X_n)
		- \sum_{k=0}^{n} q_k \pcnt{n}{k} g(k)
		+ \sum_{k=1}^{n+1} q_{k-1} \pcnt{n}{k-1} \Bigl(g(k-1)+\frac{1-q_{k-1}}{q_{k-1}^2}\Bigr)\\
	& = \EXP g(X_n)
		+ \sum_{k=0}^{n} \pcnt{n}{k} \frac{1-q_{k}}{q_k}. \\
	& = \EXP g(X_n) + \sum_{k=0}^n \pcnt{n}{k} q_k^{-1} -1.
\end{align*}
By~\eqref{eq:var.rec},
$\Var f(X_{n+1})-\Var f(X_n) = \EXP g(X_{n+1}) - \EXP g(X_n)$ 
holds for all $n\ge 0$.
Since $\Var f(X_0)=\EXP g(X_0)=0$, $\Var f(X_n) = \EXP g(X_n)$ holds for all $n$.
\end{proof}

\begin{proof}[Proof of Theorem \ref{tm:accuracy}]
Define
\[
W_n = \frac{\Var f(X_n)}{\EXP f^2(X_n)} 
	= \frac{\sum_{k=0}^{\infty} \pcnt{n}{k}\cdot g(k) }{\sum_{k=0}^{\infty} \pcnt{n}{k} \cdot f^2(k) }.
\]

Let~$\epsilon>0$ be an arbitrary threshold. 
By the definition of~$\lambda$, there exists~$K$ such that 
\[
\frac{g(k)}{f^2(k)} < (1+\epsilon)\lambda^2
\]
for all~$k>K$. Therefore,
\begin{align*}
W_n & = \frac{\sum_{k=0}^{K} \pcnt{n}{k} g(k) + \sum_{k>K} \pcnt{n}{k} \cdot g(k) }{\sum_{k=0}^{K} \pcnt{n}{k} f^2(k) 
	+ \sum_{k>K} \pcnt{n}{k} f^2(k)}\\*
	& <  \frac{\sum_{k=0}^{K} \pcnt{n}{k} g(k) + (1+\epsilon)\lambda^2 \sum_{k>K} \pcnt{n}{k} f^2(k)}{\sum_{k>K} \pcnt{n}{k} f^2(k)}\\*
	& =  (1+\epsilon)\lambda^2 + \frac{\sum_{k=0}^{K} \pcnt{n}{k} g(k) }{\sum_{k>K} \pcnt{n}{k} f^2(k)}.
\end{align*}
Since~$q_k>0$ for all~$k$, $\lim_{n\to\infty}  \pcnt{n}{k}=0$ for all~$k$. 
Consequently, $\lim_{n\to\infty} \sum_{k=0}^{K} \pcnt{n}{k} g(k) =0$. As 
$\lim_{n\to\infty} \sum_{k>K} \pcnt{n}{k} f^2(k) = \infty$,
there exists~$N$ such that 
\begin{equation}\label{eq:W.bound}
W_n < (1+2\epsilon)\lambda^2 \qquad \text{for all $n>N$}.
\end{equation}

Since $\Var f(X_n) = \EXP f^2(X_n) - \EXP^2 f(X_n)$, 
\[
W_n = \frac{\Var f(X_n)}{\Var f(x_n)+n^2}.
\]
By~\eqref{eq:W.bound}, 
$\frac{\Var f(X_{n})}{\Var f(X_{n})+n^2}  \le (1+2\epsilon)\lambda^2$
for all~$n>N$. So,
\begin{align*}
\frac{\Var f(X_n)}{n^2} & \le \frac{(1+2\epsilon)\lambda^2}{1-(1+2\epsilon)\lambda^2}\\*
& = \frac{\lambda^2}{1-\lambda^2}\biggl(1+\frac{2\epsilon}{1-(1+2\epsilon)\lambda^2}\biggr).
\end{align*}
Since~$\epsilon$ is arbitrarily small and $\lambda^2<1$, 
\[
\limsup_{n\to\infty} \frac{\Var f(X_n)}{n^2} \le \frac{\lambda^2}{1-\lambda^2}.
\]

The lower bound is proven analogously. Let $\epsilon>0$ be an arbitrary threshold. Let~$K$ be such that 
$\frac{g(k)}{f^2(k)} > (1-\epsilon)\mu^2$ for all $k>K$. So, 
\[
W_n > \frac{(1-\epsilon)\mu^2 \sum_{k>K} \pcnt{n}{k} f^2(k)}{\sum_{k=0}^{K} \pcnt{n}{k} f^2(k) 
	+ \sum_{k>K} \pcnt{n}{k} f^2(k)}.
\]
For~$n$ large enough, $W_n > (1-2\epsilon) \mu^2$ holds. Since $\epsilon$ is arbitrarily small, and 
$\mu^2\le\lambda^2<1$, 
\[
\liminf_{n\to\infty} \frac{\Var f(X_n)}{n^2} \ge \frac{\mu^2}{1-\mu^2}.
\] 
\end{proof}


\section{Floating-point counters}\label{sec:fp}
The counting chain for a
floating-point counter is defined using a design parameter~$M=2^{\msd}$
with some nonnegative integer~$\msd$:
\begin{subequations}\label{eq:transZ}
\begin{align}
\Probcmd{X_{n+1}=k+1}{X_n=k} & = 2^{-\lfloor k/M\rfloor};\\
\Probcmd{X_{n+1}=k}{X_n=k} & = 1-2^{-\lfloor k/M\rfloor}.
\end{align}
\end{subequations}

\begin{figure}
\centerline{\includegraphics[width=\textwidth]{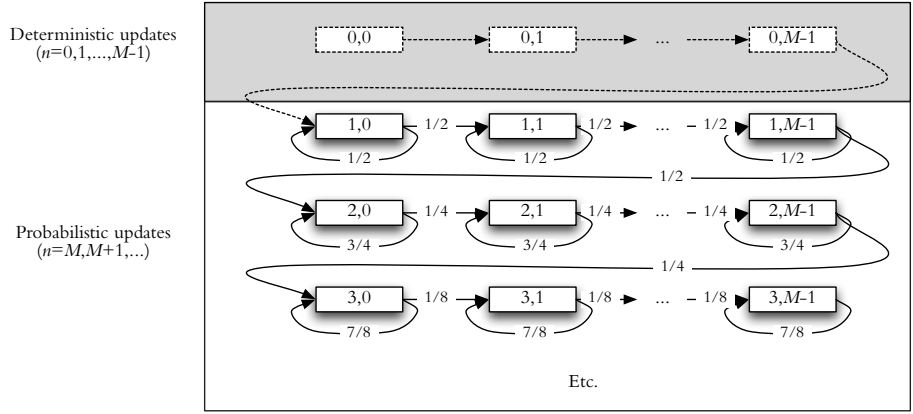}}
\caption{States of the counting Markov chain. Each state is labeled with a pair 
	$(t,u)$, where~$(u+M)$ are the most significant digits and~$t$ is the number of trailing zeros
	for the true count.}\label{fig:machine}
\end{figure}

Figure~\ref{fig:machine} illustrates the states of the floating-point counter.
The counter's designation becomes apparent 
from examining the binary representation of
the counter value~$k$.
Write $k=Mt+u$ with
\[
t = \lfloor k/M\rfloor \qquad\ 
u = k \bmod M;
\]
i.e., $u$ corresponds to the lower~${\msd}$ bits of~$k$, and~$t$ 
corresponds to the remaining upper bits. 
Theorem~\ref{tm:f} applies with~$q_k=2^{-\lfloor k/M\rfloor}$, leading to the following 
Corollary.
\begin{corollary}\label{cor:f}
The unbiased estimator 
for~$k=Mt+u$ is 
\begin{equation}\label{eq:fp.f}
f(k) = f(t,u) =(M+u)2^t-M.
\end{equation}
\end{corollary}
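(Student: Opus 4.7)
The plan is to apply Theorem~\ref{tm:f} directly and then evaluate the resulting sum by grouping terms into blocks of $M$. Substituting $q_i = 2^{-\lfloor i/M\rfloor}$ into~\eqref{eq:gen.estimator.k} gives
\[
f(k) = \sum_{i=0}^{k-1} \frac{1}{q_i} = \sum_{i=0}^{k-1} 2^{\lfloor i/M\rfloor},
\]
so the entire task reduces to evaluating this finite sum in closed form for $k = Mt + u$ with $0 \le u < M$.

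Next I would split the summation range $\{0,1,\dotsc,Mt+u-1\}$ into $t$ ``complete'' blocks of length $M$, namely $\{Mj,\dotsc,M(j+1)-1\}$ for $j=0,1,\dotsc,t-1$, followed by a ``partial'' block of $u$ indices $\{Mt,\dotsc,Mt+u-1\}$. On the $j$-th complete block, the floor $\lfloor i/M\rfloor$ is constant and equal to $j$, so the block contributes $M\cdot 2^j$. On the partial block the floor equals $t$, contributing $u\cdot 2^t$. Summing,
\[
f(k) = \sum_{j=0}^{t-1} M\cdot 2^j + u\cdot 2^t = M(2^t-1) + u\cdot 2^t = (M+u)2^t - M,
\]
which is exactly the claimed formula~\eqref{eq:fp.f}.

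There is essentially no hard step here: given Theorem~\ref{tm:f}, the argument is a one-line geometric sum once the block decomposition is in place. The only thing worth watching is the bookkeeping at the boundary when $u=0$ (the partial block is empty, and the formula correctly reduces to $M(2^t-1)$) and at $k=0$ (i.e.\ $t=u=0$), where both sides equal $0$ in accordance with~\eqref{eq:gen.estimator.0}. Uniqueness of $f$ as the unbiased estimator has already been established in Lemma~\ref{lm:estimator.unique}, so no further work is required.
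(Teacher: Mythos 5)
Your proof is correct and follows exactly the route the paper intends: the paper simply notes that Theorem~\ref{tm:f} applies with $q_k = 2^{-\lfloor k/M\rfloor}$ and leaves the summation implicit, and you have filled in that computation by the natural block decomposition and geometric sum. Nothing further is needed.
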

In other words, 
$(t,u)$ is essentially a floating-point 
representation of the true count~$n$,
where~$t$ is the exponent,
and~$u$ is a $d$-bit significand without the hidden bit for the leading `1.'

Theorem~\ref{tm:variance} yields the following Corollary.
\begin{corollary}\label{cor:variance}
The variance function for the floating-point counter is 
\begin{equation}\label{eq:fp.g}
g(k) = g(t,u) = \biggl(\frac{M}3+u\biggr)4^t - (M+u)2^t +\frac{2}{3}M. 
\end{equation}
\end{corollary}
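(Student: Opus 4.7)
The plan is to apply Definition~\ref{def:g} (justified by Theorem~\ref{tm:variance}) directly to the transition probabilities $q_i = 2^{-\lfloor i/M\rfloor}$ specified in~\eqref{eq:transZ}. For such $q_i$, each summand simplifies as
\[
\frac{1-q_i}{q_i^2} = q_i^{-2} - q_i^{-1} = 4^{\lfloor i/M\rfloor} - 2^{\lfloor i/M\rfloor},
\]
so that $g(k) = \sum_{i=0}^{k-1}\bigl(4^{\lfloor i/M\rfloor} - 2^{\lfloor i/M\rfloor}\bigr)$.

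Next, I would exploit the block structure of $\lfloor i/M\rfloor$. Writing $k = Mt + u$ with $0 \le u < M$, the indices $i \in \{0,1,\dots,k-1\}$ decompose into $t$ complete blocks of size $M$, on which $\lfloor i/M\rfloor$ takes each of the values $0,1,\dots,t-1$, together with a partial block of $u$ indices on which $\lfloor i/M\rfloor = t$. This gives
\[
g(k) = M \sum_{j=0}^{t-1} \bigl(4^j - 2^j\bigr) + u\bigl(4^t - 2^t\bigr).
\]

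Then I would evaluate the geometric sums $\sum_{j=0}^{t-1} 4^j = (4^t-1)/3$ and $\sum_{j=0}^{t-1} 2^j = 2^t-1$, substitute, and regroup the resulting expression according to powers of $2^t$ and $4^t$. Collecting the coefficient of $4^t$ produces $\bigl(M/3 + u\bigr)$, the coefficient of $2^t$ produces $-(M+u)$, and the constants combine to $-M/3 + M = 2M/3$, matching~\eqref{eq:fp.g} exactly.

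There is no genuine obstacle here, since Theorem~\ref{tm:variance} reduces the claim to a finite sum with a simple block pattern; the only thing to be careful about is the bookkeeping in the index split $k = Mt + u$ and the arithmetic that condenses the constant term to $\tfrac{2}{3}M$.
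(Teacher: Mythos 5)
Your proof is correct and follows the same route as the paper: the paper gives no explicit argument beyond invoking Theorem~\ref{tm:variance} (really Definition~\ref{def:g}), and the intended computation is precisely the one you carry out, namely plugging in $q_i=2^{-\lfloor i/M\rfloor}$, splitting the sum over $i<k=Mt+u$ into $t$ full blocks of size $M$ plus a partial block of size $u$, and summing the geometric series. The arithmetic you sketch does reproduce $\bigl(\tfrac{M}{3}+u\bigr)4^t-(M+u)2^t+\tfrac{2}{3}M$.
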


Combining Corollaries~\ref{cor:f} and~\ref{cor:variance},  
we get the following bounds.
\begin{corollary}\label{cor:accuracy}
The accuracy of the floating-point counter is asymptotically bounded as
\begin{align*}
\limsup_{n\to\infty} \accuracy_n & \le \sqrt{\frac{3}{8M-3}}\\
\liminf_{n\to\infty} \accuracy_n & \ge \sqrt{\frac{1}{3M-1}}
\end{align*}
\end{corollary}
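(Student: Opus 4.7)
The plan is to apply Theorem~\ref{tm:accuracy} with $\accbound_k = \sqrt{g(k)}/f(k)$ and the closed-form expressions supplied by Corollaries~\ref{cor:f} and~\ref{cor:variance}. Writing $k = Mt+u$ with $u\in\{0,1,\dotsc,M-1\}$, I would observe that the leading $4^t$ terms dominate both $g(k)$ and $f^2(k)$ as $t\to\infty$, and therefore
\[
\accbound_k^2 = \frac{g(k)}{f^2(k)} \;\longrightarrow\; \phi(u) \;:=\; \frac{M/3 + u}{(M+u)^2}
\]
along any subsequence on which $u$ is held fixed. Because each residue $u\in\{0,\dotsc,M-1\}$ recurs infinitely often as $k\to\infty$, this would give $\liminf_k \accbound_k^2 = \min_u \phi(u)$ and $\limsup_k \accbound_k^2 = \max_u \phi(u)$.

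The second step is to optimize $\phi$ over $\{0,\dotsc,M-1\}$. Differentiating yields $\phi'(u) = (M/3 - u)/(M+u)^3$, so $\phi$ is unimodal with continuous maximum $\phi(M/3) = 3/(8M)$. Hence $\lambda^2 := \max_u \phi(u) \le 3/(8M) < 1$, which also verifies the hypothesis $\lambda<1$ of Theorem~\ref{tm:accuracy}. By unimodality, the minimum is attained at an endpoint; comparing $\phi(0) = 1/(3M)$ with $\phi(M-1) = (4M-3)/(3(2M-1)^2)$ reduces to $(2M-1)^2 \le M(4M-3)$, i.e.\ $1 \le M$, so $\mu^2 := \min_u \phi(u) = 1/(3M)$.

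Finally, since $x\mapsto x/(1-x)$ is monotone increasing on $[0,1)$, substituting into Theorem~\ref{tm:accuracy} gives
\[
\limsup_{n\to\infty}\accuracy_n \le \sqrt{\frac{\lambda^2}{1-\lambda^2}} \le \sqrt{\frac{3/(8M)}{1-3/(8M)}} = \sqrt{\frac{3}{8M-3}},
\]
and analogously $\liminf_{n\to\infty}\accuracy_n \ge \sqrt{\mu^2/(1-\mu^2)} = \sqrt{1/(3M-1)}$. The only subtle point in the whole argument is the preliminary asymptotic step: I need the $O(2^t)$ subleading terms in $g(k)$ and $f^2(k)$ to vanish uniformly in the residue $u$, but because $u$ is bounded by $M-1$ and those terms are smaller than the leading $4^t$ terms by a factor of $2^{-t}$, this uniformity is immediate. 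The rest is one-variable calculus on $\phi$.
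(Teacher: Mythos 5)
Your proof is correct and takes essentially the same route as the paper: compute $\lim_{t\to\infty} g(t,u)/f^2(t,u) = \bigl(\tfrac{M}{3}+u\bigr)/(M+u)^2$ along each residue class $u$, optimize over $u\in\{0,\dotsc,M-1\}$, and feed the resulting $\mu^2$ and $\lambda^2$ into Theorem~\ref{tm:accuracy}. If anything you are a bit more careful than the published argument, which states $\lambda^2 = \tfrac{3}{8}M^{-1}$ as an equality ``at $u=M/3$'' without noting that $M/3$ need not be an integer, whereas you correctly observe that only $\lambda^2 \le 3/(8M)$ is guaranteed (and that this suffices); your explicit check that the minimum over the integer range is attained at $u=0$ rather than $u=M-1$, and your remark that the subleading terms vanish uniformly in $u$ because $u$ is bounded, also fill in small gaps the paper leaves implicit.
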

\begin{proof}
By Equations~\eqref{eq:fp.f} and~\eqref{eq:fp.g}, we have 
\[
\lim_{t\to\infty} \frac{g(t,u)}{f^2(t,u)} = \frac{\frac{M}3+u}{(M+u)^2}.
\]
Considering the extreme values at~$u=0$ and $u=M/3$, respectively: 
\begin{equation}
\mu^2 = \liminf_{k\to\infty} \frac{g(k)}{f^2(k)} = \frac13 M^{-1};\qquad
\lambda^2 = \limsup_{k\to\infty} \frac{g(k)}{f^2(k)} = \frac38 M^{-1}.
\end{equation}
Plugging these limits into Theorem~\ref{tm:accuracy} leads to the Corollary. 
\end{proof}

For large~$M=2^{\msd}$, the bounds of Corollary~\ref{cor:accuracy}
become 
\begin{align*}
\limsup_{n\to\infty} \accuracy_n & \lessapprox 2^{-\msd/2}\sqrt{3/8} \approx 0.612\cdot 2^{-\msd/2}\\
\liminf_{n\to\infty} \accuracy_n & \gtrapprox 2^{-\msd/2}\sqrt{1/3} \approx 0.577\cdot 2^{-\msd/2}.
\end{align*}

The accuracy is thus comparable to the accuracy of a~$q$-ary counter with~$q=2^{2^{-\msd}}$, which 
is approximately
$2^{-\msd/2} \sqrt{0.5\cdot \ln 2} \approx 0.589\cdot 2^{-\msd/2}$. 
The memory requirements of the two counters are equivalent: 
in order to count up to $n=f(k)$, $\lg k=d+\lg\lg n +o(1)$ bits are necessary. 

Figures~\ref{fig:trajectory} and~\ref{fig:estimate} compare 
the performance of the floating-point counters with 
equivalent base-$q$ counters in simulation experiments.  
The equivalence is manifest on 
Figure~\ref{fig:trajectory} that
illustrates the trajectories of the estimates by the different counters.
Figure~\ref{fig:estimate}
plots statistics about the estimates across multiple experiments: 
the estimators are clearly unbiased, and the two counters display 
the same accuracy.

\begin{figure}
\centerline{\includegraphics[width=\textwidth]{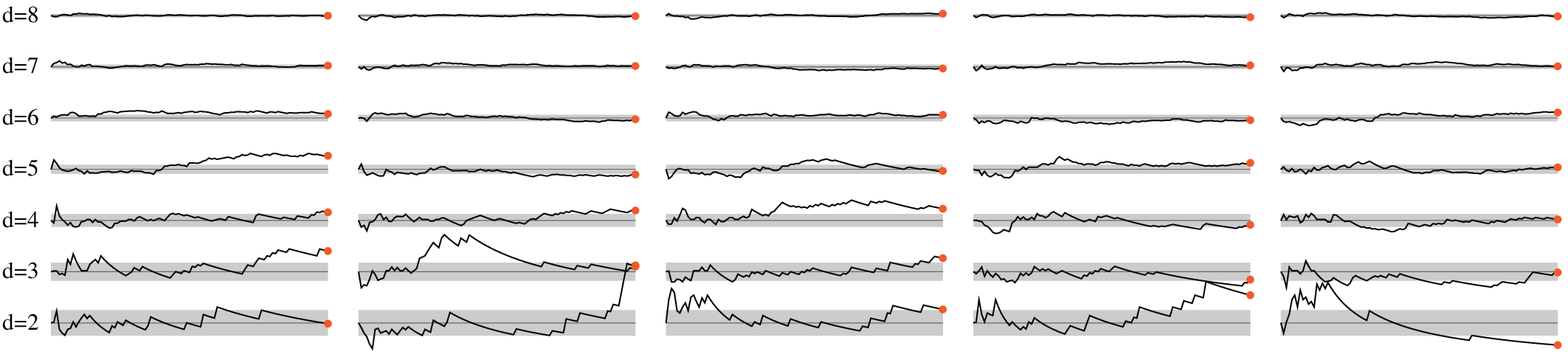}}
\centerline{\includegraphics[width=\textwidth]{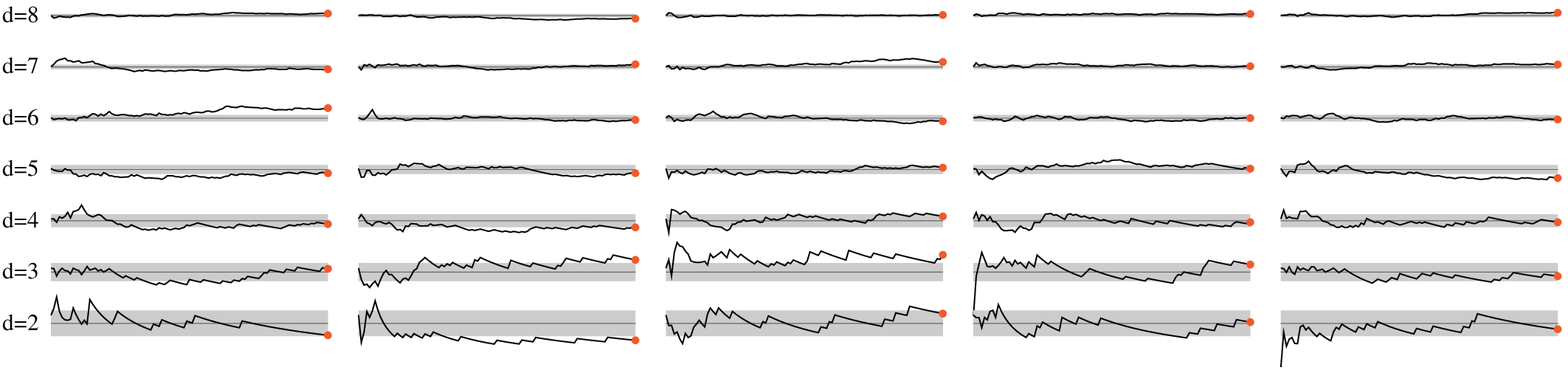}}
\caption{
Error trajectories for floating-point counters (\textbf{top}) and $q$-ary counters (\textbf{bottom}).
Each trajectory follows the the appropriate counting chain in a random simulated run. 
The lines trace the relative error $(f(X_n)-n)/n$ for 
floating-point counters with $\msd$-bit 
mantissa, and comparable $q$-ary counters with $q=2^{1/r}$
where $r=2^{\msd}$.
The shaded areas indicate a relative error of $\pm0.59\cdot 2^{-\msd/2}$.
The dots at the end of the trajectories denote the final value for $n=100000$. 
}\label{fig:trajectory}
\end{figure}

%

\begin{figure}
\centerline{\includegraphics[width=\textwidth]{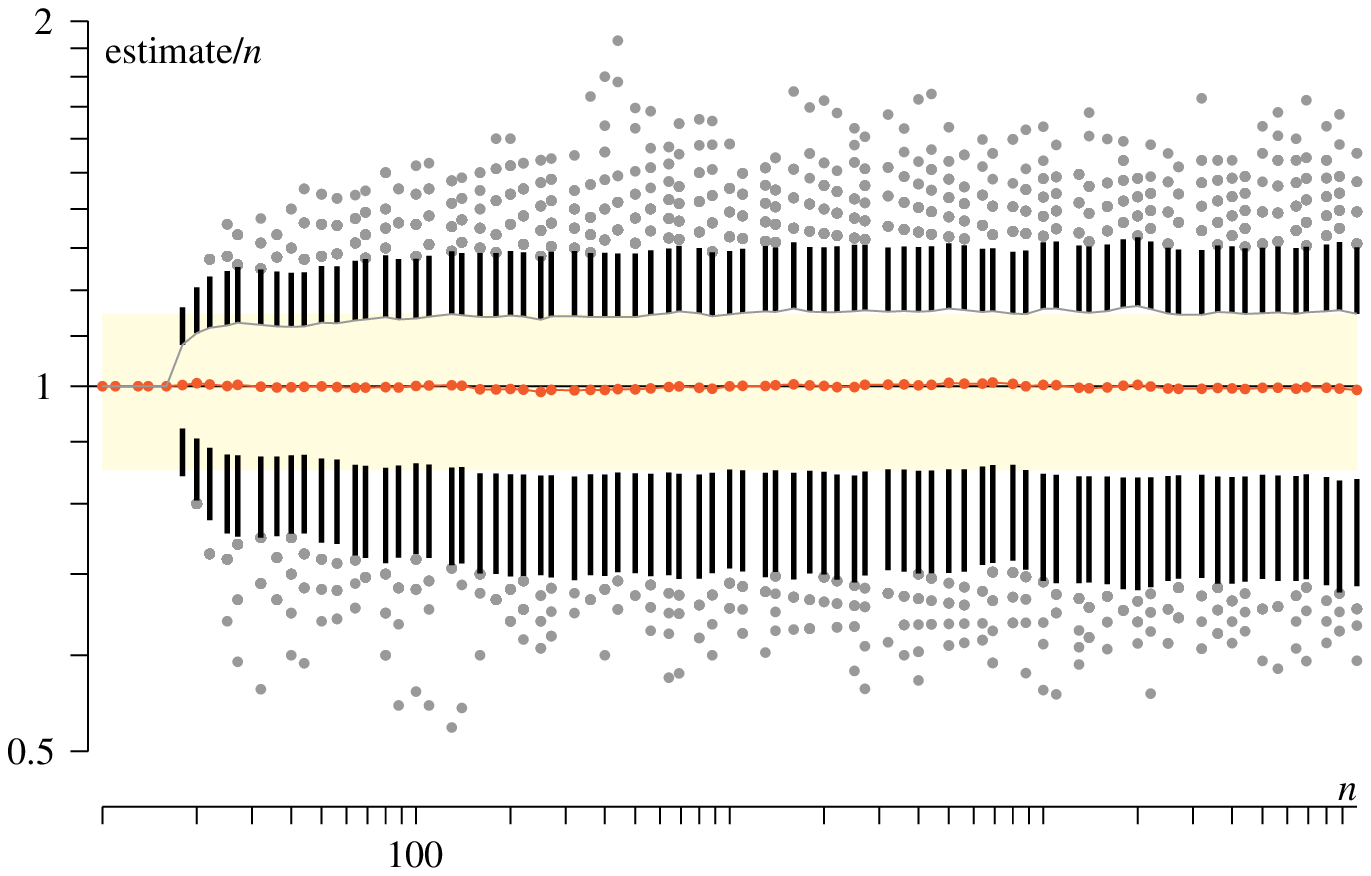}}
\centerline{\includegraphics[width=\textwidth]{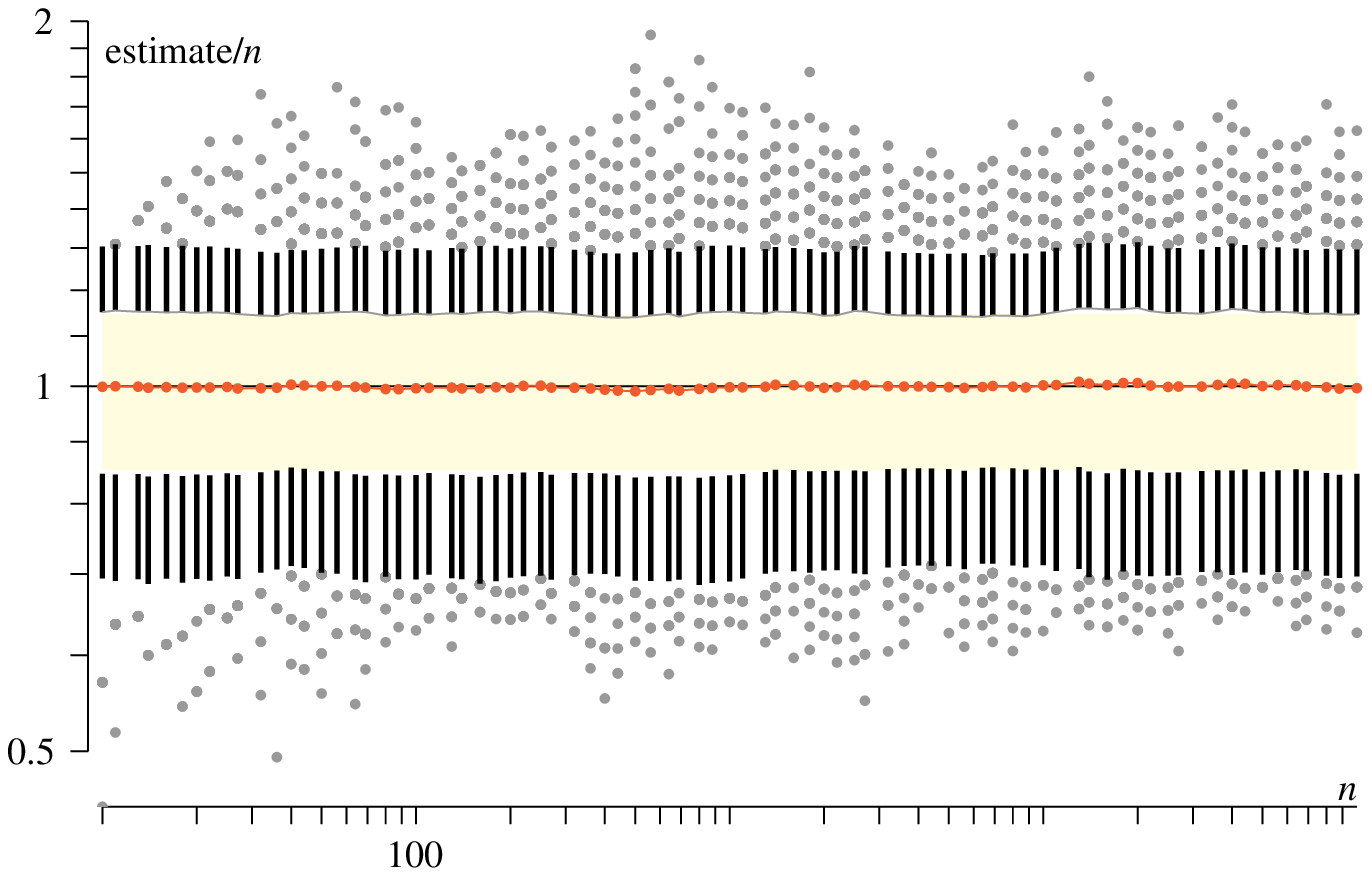}}
\caption{Distribution of the estimates for a floating-point counter (\textbf{top}) and 
a comparable $q$-ary counter (\textbf{bottom}).
Each plot depicts the result of 1000 experiments, in which a floating-point counter 
with $\msd=4$-bit mantissa, and a $q$-ary counter with $q=2^{1/16}$ were run until $n=100,000$. 
The dots in the middle follow the averages; the black segments 
depict the standard deviations 
(for each $\sigma$, they are of length~$\sigma$ spaced at $\sigma$ from the average),
and grey dots show outliers that differ by more than $\pm 2\sigma$ from the average. 
The shading highlights the asymptotic relative accuracy of the $q$-ary counter
($\approx 0.59\cdot 2^{-\msd/2}$).
}\label{fig:estimate}
\end{figure}


\section*{Acknowledgment}
I am very grateful to Philippe Flajolet for valuable suggestions 
on improving previous versions of the manuscript. 


\begin{thebibliography}{10}

\bibitem{Bechetti.Web.spam}
Luca Becchetti, Carlos Castillo, Debora Donato, Ricardo Baeza-Yates, and
  Stefano Leonardi, \emph{Link analysis for {W}eb spam detection}, ACM
  Transactions on the Web \textbf{2} (2008), no.~1, 2.

\bibitem{spectrum.dPln}
Mikl\'os Cs\H{u}r\"os, Laurent No\'e, and Gregory Kucherov, \emph{Reconsidering
  the significance of genomic word frequencies}, Trends in Genetics \textbf{23}
  (2007), no.~11, 543--546, Preprint available as q-bio/0609022 at arXiv.org.

\bibitem{Donato.webgraph}
D.~Donato, L.~Laura, S.~Leonardi, and S.~Millozzi, \emph{Large-scale properties
  of the {W}ebgraph}, European Physics Journal B \textbf{38} (2004), 239--243.

\bibitem{Flajolet.analysis.Morris}
Philippe Flajolet, \emph{Approximate counting: A detailed analysis}, BIT
  \textbf{25} (1985), 113--134.

\bibitem{JonesPevzner.longmotifs}
Neil~C. Jones and Pavel~A. Pevzner, \emph{Comparative genomics reveals
  unusually long motifs in mammalian genomes}, Bioinformatics \textbf{22}
  (2006), no.~14, e236--e242.

\bibitem{Karlin.symposium}
Samuel Karlin, \emph{Statistical signals in bioinformatics}, Proceedings of the
  National Academy of Sciences of the USA \textbf{102} (2005), no.~38,
  13355--13362.

\bibitem{KarlinTaylor}
Samuel Karlin and Howard~M. Taylor, \emph{A first course in stochastic
  processes}, second ed., Academic Press, San Diego, Cal., 1975.

\bibitem{Morris.counting}
Robert Morris, \emph{Counting large number of events in small registers},
  Communications of the ACM \textbf{21} (1978), no.~10, 840--842.

\bibitem{SSAHA}
Zemin Ning, Anthony~J. Cox, and James~C. Mullikin, \emph{{SSAHA}: A fast search
  method for large {DNA} databases}, Genome Research \textbf{11} (2001),
  no.~10, 1725--1729.

\bibitem{Pyknon}
Isidore Rigoutsos, Tien Huynh, Kevin Miranda, Aristotelis Tsirigos, Alice
  McHardy, and Daniel Platt, \emph{Short blocks from the noncoding parts of the
  human genome have instances within nearly all known genes and relate to
  biological processes}, Proceedings of the National Academy of Sciences of the
  USA \textbf{103} (2006), no.~17, 6605--6610.

\bibitem{Yorke.duplication.counts}
Suzanne~S. Sindi, Brian~R. Hunt, and James~A. Yorke, \emph{Duplication count
  distributions in {DNA} sequences}, Physical Review E (Statistical, Nonlinear,
  and Soft Matter Physics) \textbf{78} (2008), no.~6, 061912.

\end{thebibliography}

\providecommand{\bysame}{\leavevmode\hbox to3em{\hrulefill}\thinspace}
\providecommand{\MR}{\relax\ifhmode\unskip\space\fi MR }
\providecommand{\MRhref}[2]{%
  \href{http://www.ams.org/mathscinet-getitem?mr=#1}{#2}
}
\providecommand{\href}[2]{#2}

\end{document}